\newcommand{\R}{\mathbb R}
\newcommand{\C}{\mathbb C}
\newcommand{\vertiii}[1]{{\left\vert\kern-0.25ex\left\vert\kern-0.25ex\left\vert #1 
    \right\vert\kern-0.25ex\right\vert\kern-0.25ex\right\vert}}
\newcommand{\Acal}{{\mathcal A}}
\newcommand{\Bcal}{{\mathcal B}}
\newcommand{\Hcal}{{\mathcal H}}
\newcommand{\PiAcal}{{\Pi_{\Acal}}}
\newcommand{\PiBcal}{{\Pi_{\Bcal}}}
\newcommand{\N}{\mathbb{N}}
\newcommand{\nab}{n_{\Acal,\Bcal}}
\newcommand{\nabmin}{\nab^{\min}}
\newcommand{\na}{n_{\Acal}}
\newcommand{\nb}{n_{\Bcal}}
\newcommand{\mab}{m_{\Acal,\Bcal}}
\newcommand{\Mab}{M_{\Acal, \Bcal}}
\newcommand{\KDNC}{{\mathcal N}_{\mathrm{NC}}}
\newtheorem{theorem}{Theorem}
\newtheorem{lemma}[theorem]{Lemma}
\newtheorem{definition}[theorem]{Definition}
\newcommand{\oset}[3][0ex]{%
	\mathrel{\mathop{#3}\limits^{
			\vbox to#1{\kern-11\ex@
				\hbox{$\scriptstyle#2$}\vss}}}}
\begin{document}

\title{Kirkwood-Dirac nonclassicality, support uncertainty and complete incompatibility}

\author{Stephan De Bi\`evre$^{1}$}

\address{
$^1$Univ. Lille, CNRS, UMR 8524, Inria - Laboratoire Paul Painlev\'e, F-59000 Lille, France
}


\begin{abstract}
Given two orthonormal bases $\Acal=\{|a_i\rangle\}$ and $\Bcal=\{|b_j\rangle\}$ in a $d$-dimensional Hilbert space $\Hcal$,
 one associates to each state its Kirkwood-Dirac (KD) quasi-probability distribution. KD-nonclassical states -- for which the KD-distribution takes on negative and/or nonreal values -- have  been shown to provide a quantum advantage in quantum metrology and information, raising the question of their identification. 
Under  suitable conditions of incompatibility between the two bases, we provide sharp lower bounds on the support uncertainty of states that guarantee their KD-nonclassicality.  In particular, when the bases are \emph{completely incompatible}, a notion we introduce,
states whose support uncertainty is not equal to its minimal value $d+1$ are necessarily KD-nonclassical.
The implications of these general results for various commonly used bases, including the mutually unbiased ones, and their perturbations, are detailed. 

\end{abstract}
\pacs{vvv}

\maketitle
\section{Introduction}
The nonclassical features of quantum mechanical states can be of a very diverse nature. Incompatible, noncommuting and complementary observables, (de)coherence, interference, uncertainty principles,  negativity or non-reality of quasi-probability distributions, entanglement, noncontextuality and nonlocality constitute a non-exhaustive list of concepts used to evaluate the degree to which quantum states of a  variety of physical systems may or may not exhibit manifestly nonclassical behaviour in various experimental situations.  Partially in order to obtain a better understanding of quantum mechanics and partially because such nonclassical behaviour has proven essential for a number of tasks in quantum information theory and metrology, the study of their properties and of the relationship between them attracts continued attention~\cite{Jo07, Sp08,   HeWo10, Fe11, LuBa12, BaLu14, Pu14, Dr15, ThGiChHoBaLu16, HeEtAl16, YuSwDr18, Lo18, DeFaKa19, ArEtAl20, ArChHa20, CaHeTo20, UoEtAl21}. This is  in particular so when the Hilbert space of states $\Hcal$ is finite dimensional, as for systems of qudits or qubits, which is our focus here.

Given orthonormal bases $\Acal=\{|a_i\rangle\}$ and $\Bcal=\{|b_j\rangle\}$, the Kirkwood-Dirac (KD) distribution of a state $\psi$~\cite{Ki33, Di45} is the quasi-probability distribution
\begin{equation}\label{eq:KD}
Q(\psi)_{ij}=\langle a_i|\psi\rangle\langle\psi|b_j\rangle \langle b_j|a_i\rangle, 1\leq i,j\leq d,
\end{equation}
similar in spirit to the Wigner distribution~\cite{cagl69a, cagl69b} in continuous variable quantum mechanics. It is complex-valued and satisfies
$
\sum_{ij} Q(\psi)_{ij}=1,
$
with marginals
$
\sum_j Q(\psi)_{ij}=|\langle a_i|\psi\rangle|^2,  \sum_i Q(\psi)_{ij}=|\langle b_j|\psi\rangle|^2.
$
A state $\psi\in\Hcal$ is \emph{KD-classical} if its KD-distribution is real nonnegative everywhere so that its KD-distribution is a probability distribution. If not, it is \emph{KD-nonclassical}. 
KD-nonclassicality  is used  in quantum tomography~\cite{LuBa12, BaLu14, ThGiChHoBaLu16} as well as in the theory and applications of weak measurements, (non)contextuality, and their relation to nonclassical effects in quantum mechanics~\cite{Sp08, Dr15, Pu14, Lo18}. Also, KD-nonclassicality provides an operational quantum advantage in postselected metrology~\cite{ArEtAl20}.  

This poses the question  how to ensure the prevalence of KD-nonclassical states? The KD-distribution and hence the KD-nonclassicality of $\psi$  depend not only on $\psi$, but also on $\Acal$ and $\Bcal$. 
In applications their choice is linked to the identification of two observables $A$ and $B$ of which they are eigenbases. The questions are therefore: under what conditions on $\Acal$ and $\Bcal$ are most states KD-nonclassical and where in $\Hcal$ are those states located?

A form of incompatibility or of noncommutativity is needed between the projectors $|a_i\rangle\langle a_i|$ and $|b_j\rangle\langle b_j|$ for KD-nonclassical states to exist. Indeed, if for example $A$ and $B$ have nondegenerate spectra, and if they are compatible in the usual sense that  $[A,B]=0$, then all those projectors commute, and therefore each $|a_i\rangle$ is up to a phase equal to some $|b_j\rangle$. It is then immediate from~\eqref{eq:KD} that $Q_{ij}(\psi)\geq 0$ for all $\psi$ so that there are no KD-nonclassical states in $\Hcal$.  Under the assumption that $A$ and $B$ are incompatible in the sense that they do not commute, a sufficient but non-optimal condition for a state to be KD-nonclassical was given in~\cite{ArChHa20}.  In the present paper, we sharpen and optimize that result in several ways. We start our analysis from
the observation that the above notion of incompatibility, and others close to it~\cite{HeWo10, HeEtAl16,  DeFaKa19,  CaHeTo20, UoEtAl21}, is weak, since it is based on the negation of a very strong notion of compatibility, namely that $A$ and $B$ commute. We introduce two increasingly restrictive notions of incompatibility, \emph{strong incompatibility} and \emph{complete incompatibility}, not obtained by negating a form of  commutativity, but  emanating directly from an analysis of incompatible successive measurements.  

We show that under these hypotheses, the KD-(non)classical states can be identified through their \emph{support uncertainty}, a measure of their uncertainty with respect to their $\Acal$- and $\Bcal$-representations [See Eq.~\eqref{eq:suppunc}].  We illustrate the power of these general results by analyzing the KD-nonclassicality for  spin bases, for mutually unbiased bases (MUBs), including the discrete Fourier transform (DFT), and their perturbations.

\section{Strongly incompatible bases}\label{s:stroinc}
Let $U$ be the unitary transition operator between $\Acal$ and $\Bcal$, defined as 
$U|a_j\rangle=|b_j\rangle$, with matrix elements $U_{ij}=\langle a_i|b_j\rangle$ in $\Acal$. Let $\mab^2=\min_{i,j} |\langle a_i|b_j\rangle|^2$, $\Mab^2=\max_{i,j} |\langle a_i|b_j\rangle|^2$, then
$$
0\leq\mab\leq d^{-1/2}\leq \Mab \leq 1.
$$
It is easily seen that $\Acal$ and $\Bcal$ are incompatible in the sense of ``noncommuting'' iff there exist $1\leq i,j\leq d$ so that $0<|\langle a_i|b_j\rangle|^2<1$. 
Hence, if $\Mab<1$, the bases are incompatible. The converse is not true; however, we note that  $\Mab=1$ if and only if there exist $k\geq1$ states in $\Acal$ that are equal, up to a phase, to states in $\Bcal$. Reordering the bases, we have $|a_i\rangle=|b_i\rangle$ for $i\in\llbracket 1,k\rrbracket$.  The KD-distribution then satisfies $Q_{ij}=|\langle a_i|\psi\rangle|^2\delta_{i,j}$ for $i,j\in\llbracket 1,k\rrbracket$.  It is moreover block-diagonal and one can therefore  restrict the KD-nonclassicality analysis to the $(d-k)$-dimensional Hilbert subspace orthogonal to the common basis vectors. Without loss of generality, we therefore assume that $\Mab<1$. 
\begin{definition}\label{def:STROINC}
We say
$\Acal$ and $\Bcal$ are  \emph{strongly incompatible  (STROINC) bases}  if, for all $1\leq i,j\leq d$, $\langle a_i|b_j\rangle\not=0$ or, equivalently, if
$
0<\mab^2\leq |\langle a_i|b_j\rangle|^2\leq \Mab^2<1.
$
\end{definition}
Strong incompatibility is easily checked to be equivalent to the requirement that \emph{none} of the projectors $|a_i\rangle\langle a_i|$ commutes with \emph{any} of the $|b_j\rangle\langle b_j|$. It is therefore stronger than the usual notion of incompatibility, which only requires that at least one pair of those projectors does not commute.   For example, consider an integer spin $s$. Then it is easily checked using the explicit expression for the Wigner rotation matrices~\cite{SA94} that 
$$
m_{J_xJ_z}:=\min_{m, m'}|\langle m_x=m|m_z=m'\rangle|=0,
$$ 
where $J_x, J_y, J_z$ are the spin operators and, for $m\in\llbracket -s, s\rrbracket$, $J_\ell|m_\ell= m\rangle=m|m_\ell= m\rangle$. (See Appendix~\ref{s:spins}.) For example, when $s=1$,   $\langle m_z=0|m_x=0\rangle=0$.  Hence, $J_x$ and $J_z$ do not commute, but they are not strongly incompatible. 

The condition $\mab>0$ means that, if the system is in the state $|a_i\rangle$, and a measurement in the $\Bcal$-basis is made, then any of the post-measurement states $|b_j\rangle$ occurs with a nonvanishing probability $|\langle a_i|b_j\rangle|^2\geq \mab^2$. The same holds with the roles of $\Acal$ and $\Bcal$ reversed. 
 Consequently, the larger $\mab$, the greater the uncertainty in the measurement outcomes. Hence a larger value of $\mab$ points to a stronger incompatibility.
The uncertainty is maximal when $\mab^2=d^{-1}$, in which case one easily sees $|\langle a_i|b_j\rangle|^2=d^{-1}$ for all $i,j$ so that all the aforementioned measurement outcomes are equally probable.  In view of this observation such bases can be considered maximally incompatible.  They are known as  mutually unbiased bases (MUBs) and  have found numerous applications in various quantum information protocols~\cite{Iv81, CeBoKaGi02, LuSuPaStBa11, LuBa12,  AgBoMiPa18, FaKa19}. We refer to~\cite{PlRoPe06, DuEnBeZy10}  for reviews on MUBs.
We can conclude that strong incompatibility  naturally interpolates between mere incompatibility and mutual unbiasedness with the parameter $0<\mab\leq d^{-1/2}$ providing a measure of the strength of the incompatibility.

\section{Support uncertainty}\label{s:supunc}
 To decide if $\psi$ is KD-nonclassical with respect to two STROINC bases, we use 
its support uncertainty  $n_{\Acal, \Bcal}(\psi)$, a notion of uncertainty that has proven useful in various contexts previously~\cite{DoSta89, MaOzPr04, Tao05, GhoJa11, WiWi21}:
\begin{equation}\label{eq:suppunc}
n_{\Acal,\Bcal}(\psi):=n_{\Acal}(\psi)+ n_{\Bcal}(\psi).
\end{equation}
Here $n_\Acal(\psi)$ (respectively $n_\Bcal(\psi)$) is the number of nonvanishing $\langle a_i|\psi\rangle$ (respectively $\langle b_j|\psi\rangle$). 
One should think of $n_\Acal(\psi)$ as the size of the support or the ``spread'' of the probability distributions $|\langle a_i|\psi\rangle|^2$ and $\langle b_j|\psi\rangle|^2$ of the state $\psi$ in the $\Acal$- and $\Bcal$-representations, which is one possible  measure of their uncertainty. Many other such measures exist, notably the entropic ones~\cite{MaUf88, Coles17}.
More precisely, we introduce the $\Acal$-support and $\Bcal$-support of $\psi$:
$
S_\psi=\{i\in\llbracket 1, d\rrbracket \mid \langle a_i|\psi\rangle\not=0\}$
and
$ 
T_\psi=\{j\in\llbracket 1, d\rrbracket \mid \langle b_j|\psi\rangle\not=0\}.
$
Then
\begin{equation}\label{eq:supportsizes}
n_\Acal(\psi)=|S_\psi|\not=0,\ n_\Bcal(\psi)=|T_\psi|\not=0;
\end{equation}
here  $|S|$ denotes the number of elements in $S$.
Clearly, $n_{\Acal,\Bcal}(\psi)$ is  a global  measure  of the uncertainty inherent in $\psi$, with respect to these two representations. 
We also introduce the \emph{minimal support uncertainty} $\nabmin$ of the bases $\Acal, \Bcal$ as
$
\nabmin=\min_{\psi\not=0} n_{\Acal,\Bcal}(\psi).
$
We call the collection of all points  $(n_\Acal, n_\Bcal)$ in the $n_\Acal$-$n_\Bcal$ plane for which there is a $\psi$ so that $n_\Acal(\psi)=n_\Acal$ and $n_\Bcal(\psi)=n_\Bcal$ the uncertainty diagram of the bases; see Fig.~\ref{fig:uncdiag}. To delimit the uncertainty diagram from below, we use an uncertainty principle originally shown for the Fourier transform on finite groups~\cite{DoSta89}, but which has much larger validity~\cite{GhoJa11, WiWi21}.  It reads
\begin{equation}\label{eq:uncprincab}
n_\Acal(\psi)n_\Bcal(\psi)\geq \Mab^{-2}.
\end{equation}
This follows immediately from
\begin{eqnarray}\label{eq:NCupperbound}
1&=&|\sum_{ij} Q_{ij}|\leq \sum_{ij} |Q_{ij}|
\leq \Mab\sum_{ij}|\langle a_i|\psi\rangle| |\langle b_j|\psi\rangle|\nonumber\\
&\leq& \Mab(\sum_{_{i\in S_\psi}}|\langle a_i|\psi\rangle|)(\sum_{j\in T_\psi}|\langle b_j|\psi\rangle|)\nonumber\\
&\leq& \Mab\sqrt{n_\Acal(\psi)n_\Bcal(\psi)}.
\end{eqnarray}
\begin{figure*}[t!]
\begin{center}
\includegraphics[height=2.4cm, keepaspectratio]{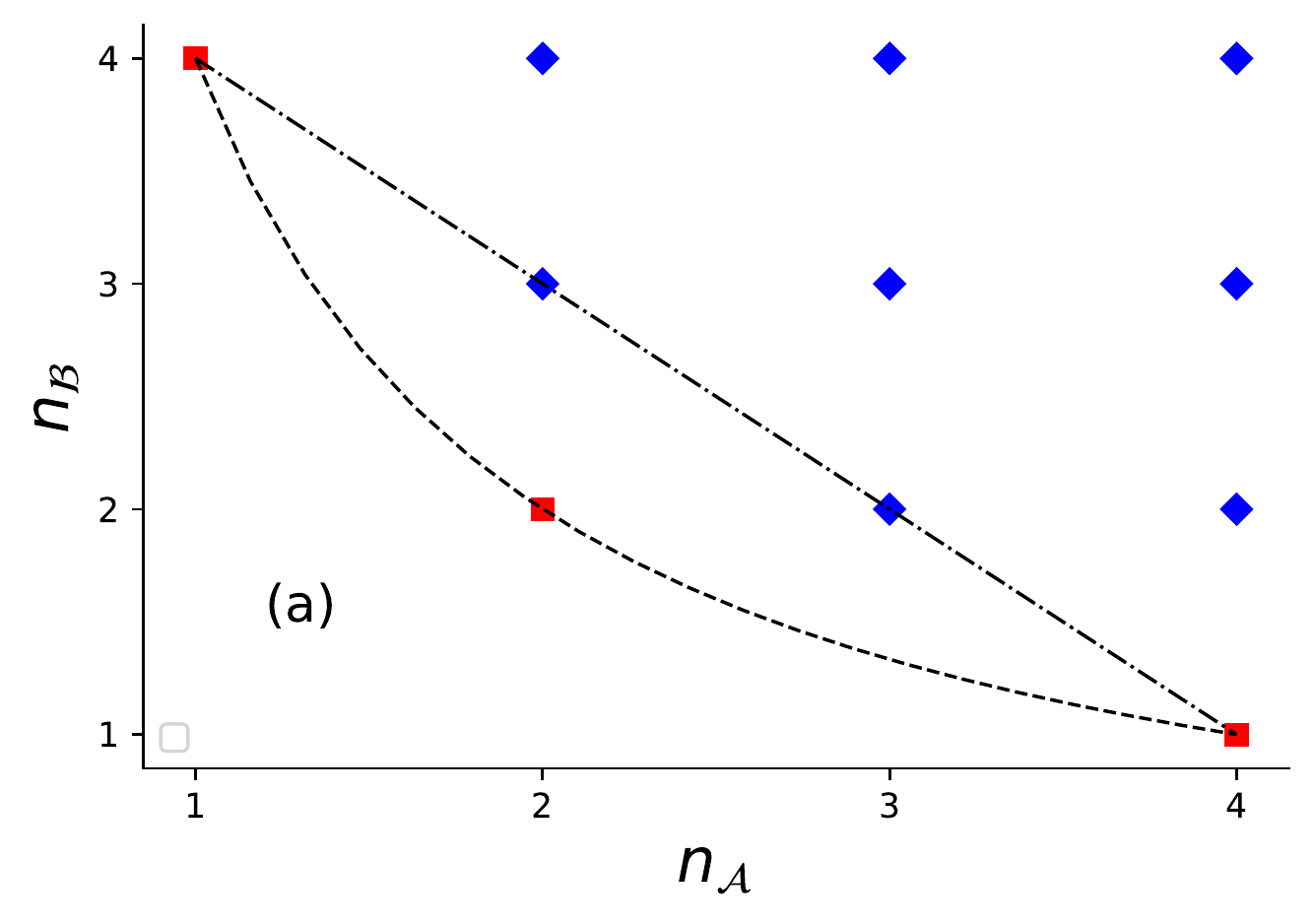}
\includegraphics[height=2.4cm, keepaspectratio]{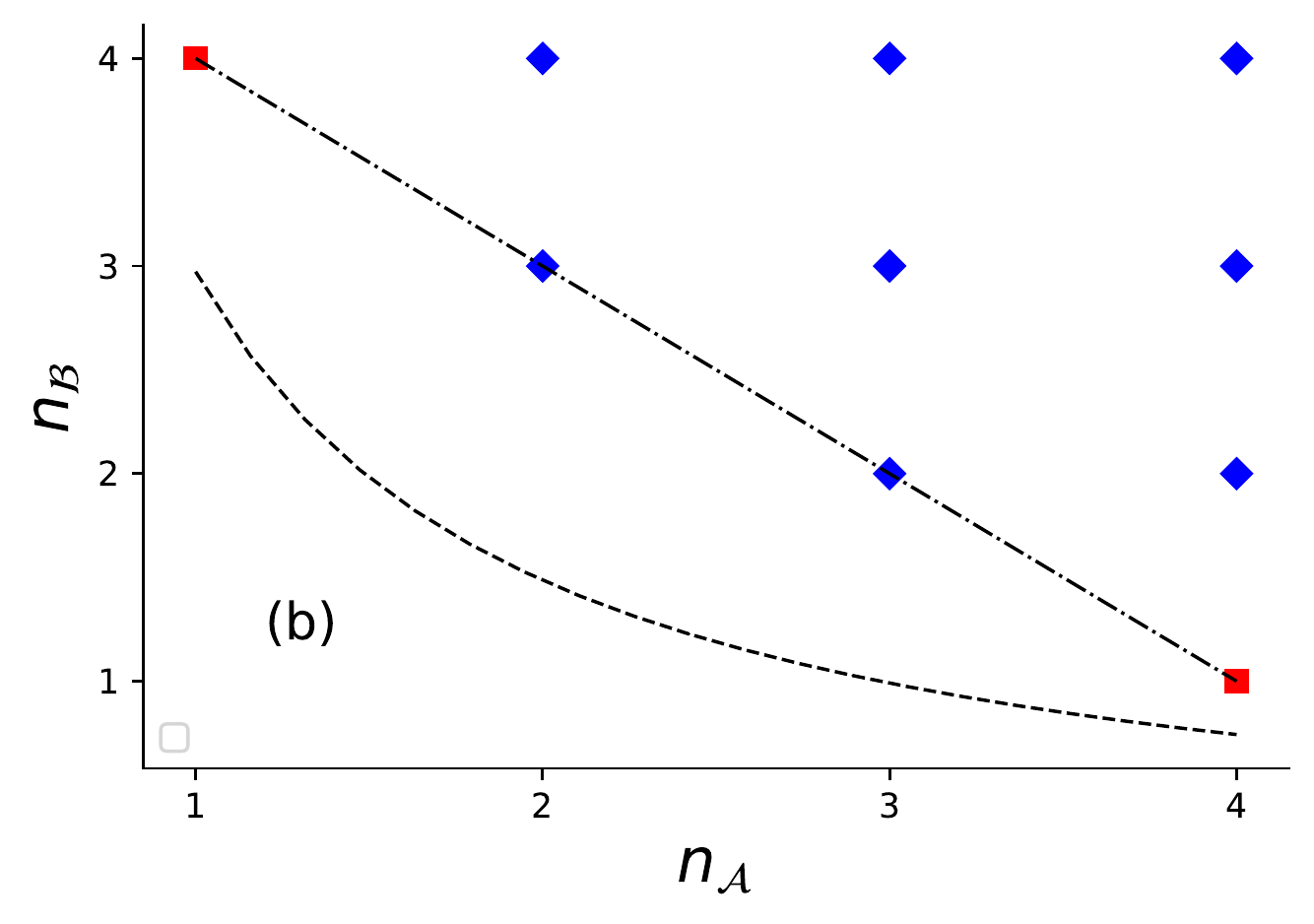}
\includegraphics[height=2.4cm, keepaspectratio]{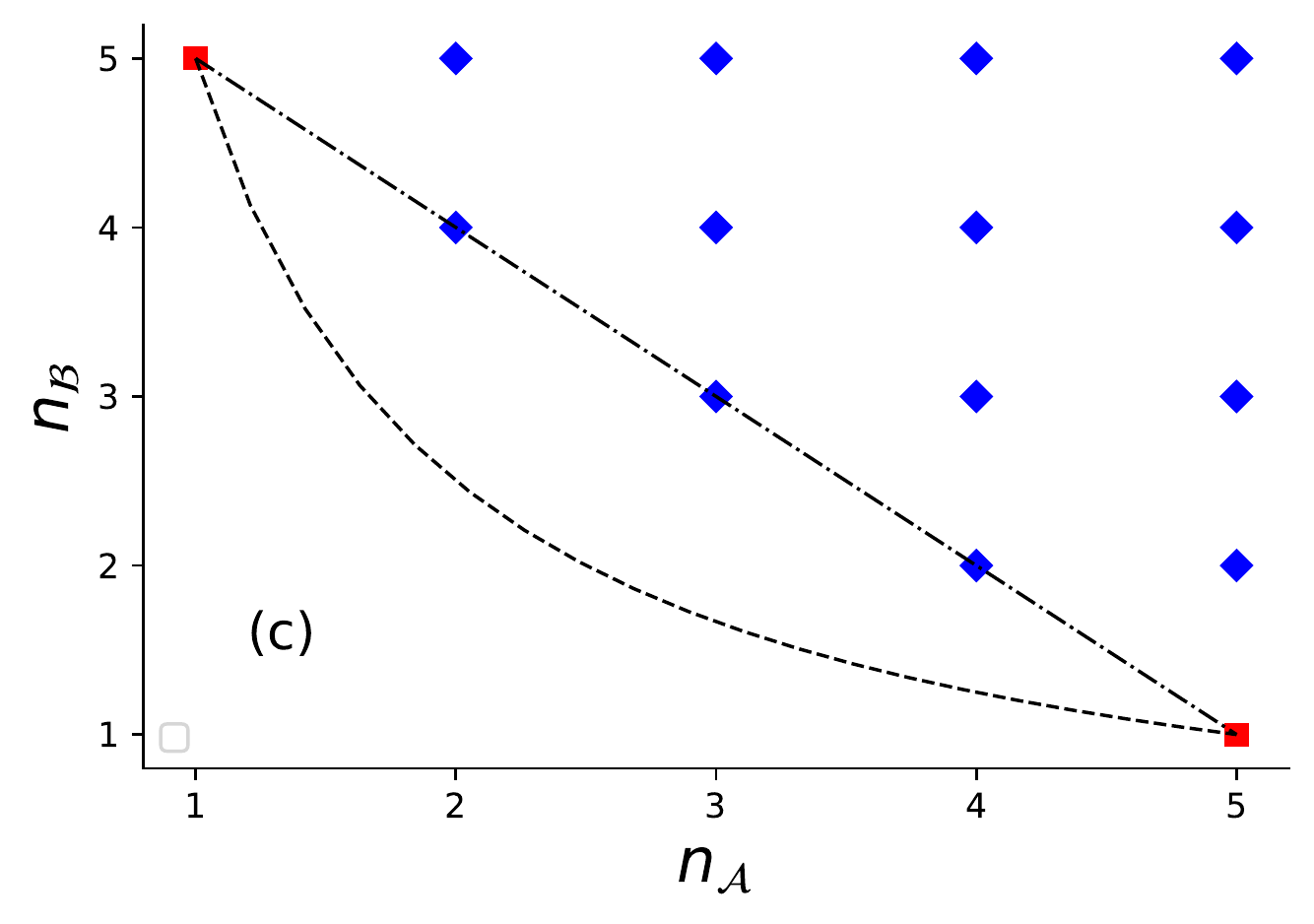}
\includegraphics[height=2.4cm, keepaspectratio]{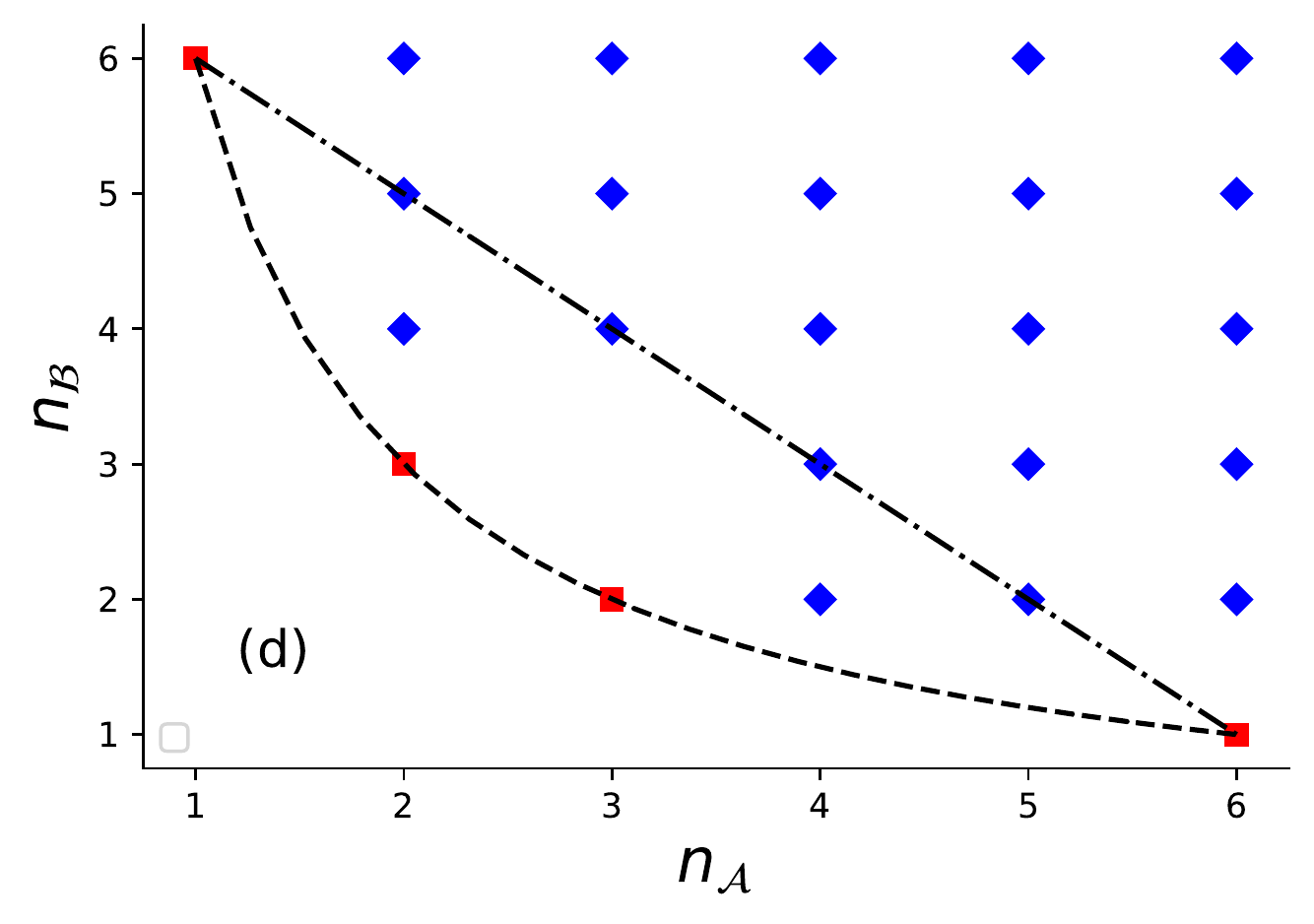}
\includegraphics[height=2.4cm, keepaspectratio]{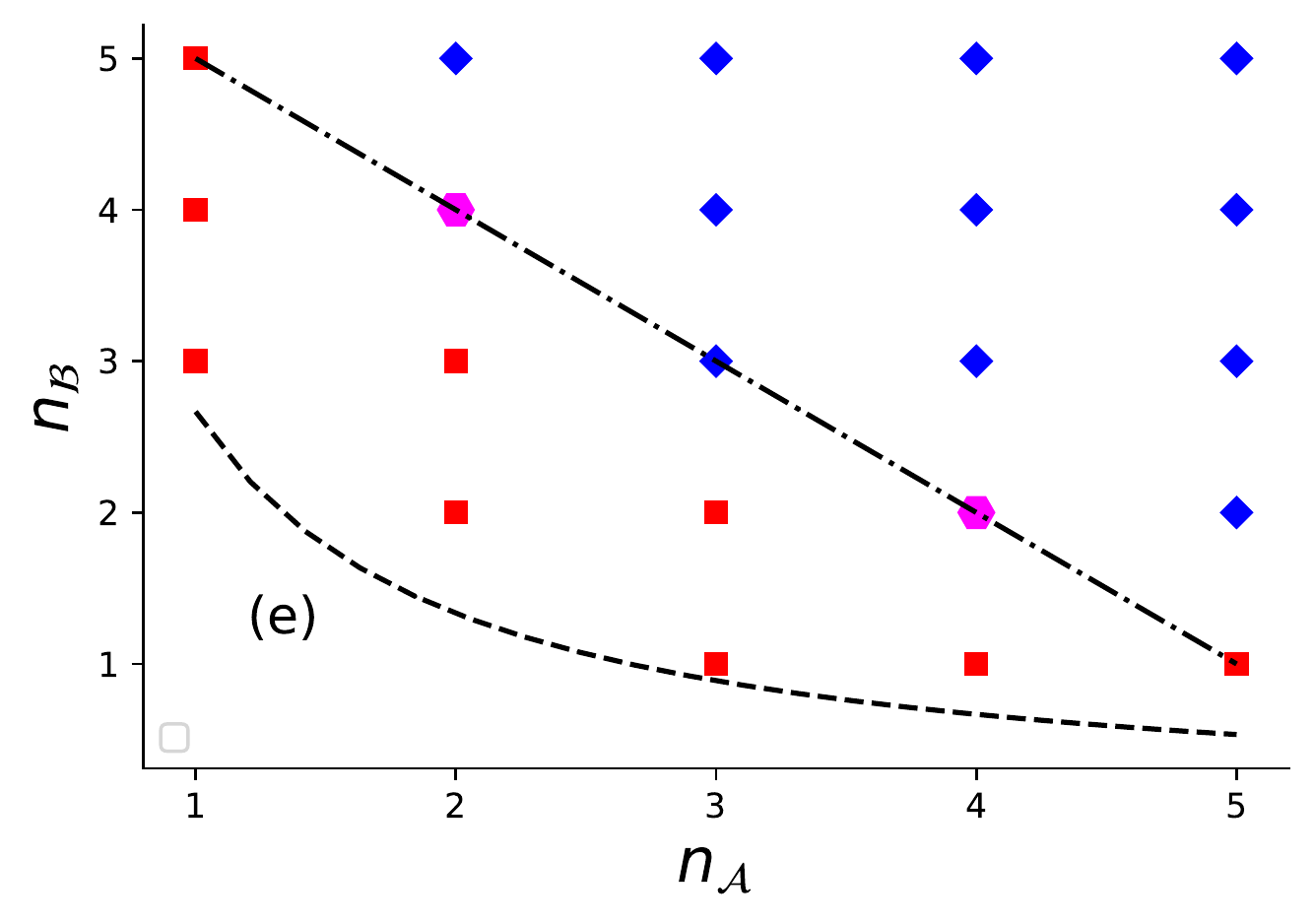}
\end{center}
\caption{Uncertainty diagrams.  Dashed curve: $n_\Acal(\psi)n_\Bcal(\psi)=\Mab^{-2}$. Dot-dashed line: $n_\Acal(\psi)+n_\Bcal(\psi)=d+1$. Diamonds (blue): KD-nonclassical states. Squares (red): KD-classical states.  
 (a) Complex MUBs; $d=4$, $\Mab^{-2}=4$, $\nabmin=4<5$. 
(b) Perturbed MUB  as in Eqn.~\eqref{eq:Upert}; $d=4$, $\epsilon=0.1$, $\Mab^{-2}=2.97<4$, $\nabmin=5$.   (c)  DFT; $d=5$, $\Mab^{-2}=5$, $\nabmin=6$.  (d) DFT, $d=6$, $\Mab^{-2}=6$, $\nabmin=5<7$. (e) Spin 2 transition matrix; $d=5$, $\Mab^{-2}=8/3<5$, $\nabmin=4<6$. Hexagons (magenta): KD-classical and KD-nonclassical states. \label{fig:uncdiag}} 
\end{figure*}

It follows the entire uncertainty diagram lies above or on the hyperbola $n_\Acal n_\Bcal=\Mab^{-2}$.  Eqn.~\eqref{eq:uncprincab} implies that
\begin{equation}\label{eq:uncprincaplusb}  
\nabmin\geq \frac{2}{\Mab}.
\end{equation}
Eqn.~\eqref{eq:uncprincab} yields more information when $\Mab$ is small than when $\Mab$ is large. In the case of MUBs, Eqns.~\eqref{eq:uncprincab} and~\eqref{eq:uncprincaplusb} become $n_\Acal(\psi)n_\Bcal(\psi)\geq d$, $\nabmin\geq 2\sqrt d$.   Fig.~\ref{fig:uncdiag} (a)-(d) shows representative examples of uncertainty diagrams of STROINC bases, on which the KD-(non)classical nature of the states has also been indicated. [For the details of the computations, see below and the Appendices.] In dimension $4$, all MUBs are known~\cite{Ba19}. They all display the same uncertainty diagram, shown in Fig.~\ref{fig:uncdiag}-(a).  Fig.~\ref{fig:uncdiag}-(b) shows the uncertainty diagram for a perturbed MUB matrix $U$ of the form 
\begin{equation}\label{eq:Upert}
U(\epsilon)=\exp(-i\epsilon L)U,
\end{equation}
where $L$ is self-adjoint. In the figure, $L_{jk}=-L_{kj}=i$, $1\leq j<k\leq d$.  Fig~\ref{fig:uncdiag}-(c)\&(d) concern bases with transition matrix  
$
U_{\textrm{DFT},i,j}:=\langle a_i|b_j\rangle=\frac1{\sqrt{d}}\exp(i\frac{2\pi}{d}{ij}),
$
the discrete Fourier transform (DFT) in dimensions $d=5$ and $d=6$.  Fig.~\ref{fig:uncdiag}-(e) shows the uncertainty diagram for the eigenbases of $J_z$ and $J_x$ for a spin $2$. As indicated above, these bases are not STROINC. One notes the presence of numerous classical states in the uncertainty diagram, a phenomenon we shall explain. 

Our results below show that the observed location of the KD-(non)classical states in these and other uncertainty diagrams can be predicted from the incompatibility properties of the bases used and the uncertainty properties of the states considered.

\section{Characterizing KD-nonclassicality}\label{s:charKDNC}
One observes in all panels of Fig.~\ref{fig:uncdiag} that there are no KD-classical states above the \emph{nonclassicality edge}, by which we mean the line segment in the first quadrant defined by $\na+\nb=d+1$. This is explained by the following theorem.
\begin{theorem}\label{thm:NCbound_tris} Let $\Acal, \Bcal$ be STROINCs on a $d$-dimensional Hilbert space $\Hcal$.  Then, if 
$\psi\in\Hcal$ satisfies
\begin{equation}\label{eq:supuncbd}
 n_{\Acal, \Bcal}(\psi)> d+1,
\end{equation}
then $\psi$ is KD-nonclassical. Equivalently,   if $\psi$ is KD-classical then $n_{\Acal, \Bcal}(\psi)\leq d+1$.
\end{theorem}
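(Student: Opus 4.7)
The plan is to prove the contrapositive: if $\psi$ is KD-classical, then $s+t\leq d+1$, where $s=n_\Acal(\psi)$, $t=n_\Bcal(\psi)$, $S=S_\psi$ and $T=T_\psi$. The argument reduces the statement to a Perron-Frobenius fact about a positive submatrix of $U$, followed by an elementary dimension count on $V_S^{\Acal}\cap V_T^{\Bcal}$ where $V_S^{\Acal}:=\mathrm{span}\{|a_i\rangle:i\in S\}$ and analogously for $V_T^{\Bcal}$.

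First, I would extract the phase structure forced by KD-classicality under STROINC. Writing $\alpha_i:=\langle a_i|\psi\rangle$ and $c_j:=\langle b_j|\psi\rangle$, STROINC gives $u_{ij}\neq 0$, so the identity $Q_{ij}=\alpha_i\overline{c_j}u_{ij}\geq 0$ forces $Q_{ij}>0$ precisely on $S\times T$, and on that rectangle the argument of each factor must add to zero, yielding the rank-one phase factorization $U_{ij}=|U_{ij}|\,f_i\,\overline{g_j}$, with $f_i:=\alpha_i/|\alpha_i|$ and $g_j:=c_j/|c_j|$. I would then absorb these phases into the bases by setting $|\tilde a_i\rangle:=f_i|a_i\rangle$ for $i\in S$ (and $|\tilde a_i\rangle:=|a_i\rangle$ otherwise) and $|\tilde b_j\rangle:=g_j|b_j\rangle$ for $j\in T$ (and otherwise unchanged). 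In these new orthonormal bases one finds $\psi=\sum_{i\in S}|\alpha_i||\tilde a_i\rangle=\sum_{j\in T}|c_j||\tilde b_j\rangle$, and the rephased block $\tilde U^{S,T}$ equals the entrywise modulus $|U^{S,T}|$, an $s\times t$ real matrix with all entries strictly positive by STROINC.

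The core of the proof is Perron-Frobenius applied to the real symmetric $s\times s$ matrix $A:=\tilde U^{S,T}(\tilde U^{S,T})^{T}$. Its entries $A_{ii'}=\sum_{j\in T}|U_{ij}||U_{i'j}|$ are strictly positive, and $\|A\|\leq 1$ because $\tilde U^{S,T}$ is a submatrix of a unitary. A short computation from the two expansions of $\psi$ above yields $\tilde U^{S,T}(|c_j|)_{j\in T}=(|\alpha_i|)_{i\in S}$ and $(\tilde U^{S,T})^{T}(|\alpha_i|)_{i\in S}=(|c_j|)_{j\in T}$, so the strictly positive vector $(|\alpha_i|)_{i\in S}$ is a fixed point of $A$. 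Hence $\|A\|=1$ and Perron-Frobenius gives that the eigenvalue $1$ is simple, with one-dimensional eigenspace in $\C^S$.

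To finish, I would identify this $1$-eigenspace with $V:=V_S^{\Acal}\cap V_T^{\Bcal}$. For a generic $\psi'\in V_S^{\Acal}$, writing $\psi'=\sum_{i\in S}\alpha'_i|\tilde a_i\rangle$, a direct computation gives $\|P_T^{\Bcal}\psi'\|^2=(\alpha'^S)^{*}A\,\alpha'^S$; since $A\leq I$, the equality $\|P_T^{\Bcal}\psi'\|=\|\psi'\|$ (i.e.\ $\psi'\in V_T^{\Bcal}$) is equivalent to $A\alpha'^S=\alpha'^S$. Thus $\dim_\C V=1$, and combining with the elementary inequality $\dim_\C V\geq s+t-d$ (from $\dim(V_S^{\Acal}+V_T^{\Bcal})\leq d$) gives $s+t\leq d+1$. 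The main obstacle I anticipate is the bookkeeping around the phase conventions so that $\tilde U^{S,T}$ becomes exactly $|U^{S,T}|$; once this is set up correctly the Perron-Frobenius step is essentially automatic, and STROINC enters decisively to guarantee that every entry of $A$ is strictly positive, which is what makes the top eigenvalue simple.
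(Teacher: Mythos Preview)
Your argument is correct and takes a genuinely different route from the paper's. Both start with the same phase-absorption step that makes the $S\times T$ block of the transition matrix strictly positive real. From there the paper works on the \emph{complement} $S^c$: expanding $\langle b_j|b_{j'}\rangle=0$ for $j\neq j'$ in $T$, the sum over $i\in S$ is strictly positive, so the sum over $i\in S^c$ is strictly negative, producing $t$ vectors in $\C^{d-s}$ with pairwise negative inner products; an elementary ``obtuse-angle'' lemma (at most $n+1$ vectors in $\C^n$ can have pairwise strictly negative inner products, proved by a short induction) then gives $t\leq(d-s)+1$. You instead stay inside the $S\times T$ block and argue spectrally: $(|\alpha_i|)_{i\in S}$ is a Perron eigenvector of $A=\tilde U^{S,T}(\tilde U^{S,T})^{T}$ at the top eigenvalue~$1$, Perron--Frobenius makes that eigenvalue simple, and identifying the $1$-eigenspace with $\PiAcal(S)\Hcal\cap\PiBcal(T)\Hcal$ plus the dimension formula yields $s+t\leq d+1$. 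The paper's version is fully self-contained and, as hinted in the text, seems easier to push beyond the STROINC hypothesis; yours invokes Perron--Frobenius as a black box but delivers the slightly sharper geometric byproduct $\dim\big(\PiAcal(S_\psi)\Hcal\cap\PiBcal(T_\psi)\Hcal\big)=1$ for every KD-classical $\psi$, and it isolates cleanly where the strict positivity from STROINC enters (namely, to make $A$ entrywise positive so that the Perron eigenvalue is simple).
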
 
The proof, which sharpens an argument in~\cite{ArChHa20}, is given in Appendix~\ref{s:proofthm}.
The lower bound in~\eqref{eq:supuncbd} is optimal in the sense that the support uncertainty of the basis vectors of STROINCs, which are all KD-classical, equals $d+1$. Nevertheless, while both optimal and sufficient for KD-nonclassicality, \eqref{eq:supuncbd} is not necessary: as can be observed in Fig.~\ref{fig:uncdiag}~(a)-(d),  KD-nonclassical states may occur below and on the nonclassicality edge.
While the transition matrix between the $J_z$ and $J_x$ bases for a spin $2$ is not STROINC since it contains zeroes, it will be shown  elsewhere~\cite{SDB21} that the conclusion of the theorem holds under weaker conditions than strong incompatibility that are satisfied in this case.  This explains the absence of KD-classical states above the nonclassicality edge in Fig.~\ref{fig:uncdiag}-(e).

In~\cite{ArChHa20}, it was shown that, if none of the $|a_i\rangle$ are equal (up to a phase) to any of the $|b_j\rangle$, then the condition
\begin{equation}\label{eq:NCsuff32}
n_\Acal(\psi)+n_\Bcal(\psi)> \lfloor{3d/2}\rfloor
\end{equation}
implies $\psi$ is KD-nonclassical; here $\lfloor x\rfloor$ is the integer part of $x$. 
 While this estimate holds under weaker conditions on the overlaps $\langle a_i|b_j\rangle$ than~$\mab>0$, it is not optimal as
 for example  when the bases are STROINC and if $d\geq 4$, since then  $\lfloor 3d/2\rfloor>d+1$; the difference between the lower bounds is increasingly pronounced for larger $d$. For example, for $d=6$ (see Fig.~\ref{fig:uncdiag}~(d)), our bound shows all states with $\nab\geq 8$ are KD-nonclassical, while the bound of~\cite{ArChHa20}  only guarantees this if $\nab\geq 10$.  

The theorem provides an upper bound to the support uncertainty of KD-classical states. As such, it is interesting to compare the situation with the one familiar from quantum mechanics and quantum optics where the ``classical'' states are the coherent states. They can be characterized as those that minimize the total noise $(\Delta Q)^2+(\Delta P)^2$, a known measure of uncertainty for pure states~\cite{hi89, yabithnaguki18, Bievre19, kwtakovoje19}. Here $Q$ and $P$ are two conjugate observables that satisfy the canonical commutation relation $[Q,P]=i$ which expresses the very  strong sense in which  they fail to commute, and -- in this sense -- their very strong incompatibility. The question then naturally arises in the discrete setting as well under which condition on the bases $\Acal$ and $\Bcal$  the support uncertainty of all  KD-classical states equals the minimal support uncertainty $\nabmin$? Note that this is the case in Fig.~\ref{fig:uncdiag}-(b)\&(c), but \emph{not} in Fig.~\ref{fig:uncdiag}-(a), (d)\&(e). 
We show below that a sufficient condition for this to happen is that the two bases are \emph{completely incompatible}, a notion we now introduce.

\section{Complete incompatibility} \label{s:coinc}
\subsection{Completely incompatible bases: definition}
Let $\Acal$ be as above and define, for any index set $S\subset \llbracket 1,d\rrbracket$, the orthogonal projector 
$$
\PiAcal(S)=\sum_{i\in S}|a_i\rangle\langle a_i|.
$$
We write $\Pi_\Acal(S)\Hcal$ for the $|S|$-dimensional subspace of $\Hcal$ onto which it projects. 
One should think of  $\Pi_\Acal(S)\Hcal$ as the set of all states $\psi$ whose $\Acal$-support $S_\psi$ lies in $S$.

  

\begin{definition}\label{def:TOINC}
We say that two bases $\Acal$ and $\Bcal$ are completely  incompatible (COINC) if and only if all index sets $S,T$ in $\llbracket 1,d\rrbracket$ for which $|S|+|T|\leq d$ have the property that $\PiAcal(S)\Hcal\cap\PiBcal(T)\Hcal=\{0\}$. 
\end{definition}

As a first motivation for this definition, we point out that, while it is only formulated in finite dimension, conjugate operators $Q$ and $P$ satisfy an analogous property. Indeed, it is well known that there do not exist states $\psi$ for which both the $Q$-representation $\psi(x)$ vanishes outside some bounded set $S\subset \R$ and the $P$-representation $\hat \psi(p)$ vanishes outside some bounded set $T\subset\R$~\cite{FoSi97}.
The above definition naturally transcribes this crucial property of conjugate operators to the finite-dimensional case. The restriction $|S|+|T|\leq d$ is then unavoidable since, for dimensional reasons, whenever $|S|+|T|>d$, the intersection $\PiAcal(S)\Hcal\cap\PiBcal(T)\Hcal$ must be nontrivial. 

While the definition is purely algebraic, its physical interpretation is readily given in terms of the quantum theory of selective projective measurements~\cite{Jo07, NiChu10, CTDL15}.  Given a basis $\Acal$, to every  partition $S_1,\dots, S_L$ of $\llbracket 1,d\rrbracket$ we associate  a projective partition of unity $\PiAcal(S_1), \dots\PiAcal(S_L)$. If initially the system is in the state $\psi\in\Hcal$ and the outcome $S_\ell$ is realized in a selective projective measurement then the post-measurement state is $\PiAcal(S_\ell)\psi/\|\PiAcal(S_\ell)\psi\|$.  The probability of this outcome is $\|\PiAcal(S_\ell)\psi\|^2$. This is referred to as a measurement in the basis $\Acal$.  When $L=d$ and $S_\ell=\{\ell\}$, for $\ell=1,\dots, d$,  the measurement is said to be fine-grained. Otherwise it is coarse-grained. When $L=2$, the corresponding measurement has only two possible outcomes and one has  $S_1=S, S_2=S^c$, for some $S\subset\llbracket 1,d\rrbracket$. 
Here $S^c=\llbracket 1,d\rrbracket\setminus S$, the complementary set of $S$. Note that projectors are observables with eigenvalues $1$ and $0$. Repeated  selective measurements of $\Pi_\Acal(S)$ and of $\Pi_\Bcal(T)$ on a system initially prepared in  $\psi$ systematically yield the outcome $1$ if and only if $\Pi_\Bcal(T)\Pi_\Acal(S)\psi$ belongs to $\Pi_\Acal(S)\Hcal\setminus\{0\}$.   Hence, if this occurs, $\PiAcal(S)\Hcal\cap \PiBcal(T)\Hcal\not=\{0\}$. In other words, the definition of COINC bases (COINCs) is equivalent to the statement that such repeated \emph{compatible selective measurements}  cannot occur for any $S,T$ for which $|S|+|T|\leq d$; they cannot occur for any insufficiently coarse-grained measurements. Since, whenever $|S|+|T|>d$, necessarily $\PiAcal(S)\Hcal\cap\PiBcal(T)\Hcal\not=\{0\}$, sufficiently coarse-grained measurements can be compatible.

When $\Acal,\Bcal$ are COINC, $\langle a_i|b_j\rangle\not=0$ for all $i,j$ and hence they are STROINC. Indeed, if for example $\langle a_1|b_1\rangle=0$, then $|a_1\rangle$ belongs to $\PiAcal(S)\Hcal\cap \PiBcal(T)\Hcal$ for $S=\{1\}$ and $T=\{2,\dots, d\}$. But since $|T|+|S|=d$, this contradicts the definition. 
Hence each basis vector $|a_i\rangle$ of $\Acal$   has full $\Bcal$-support and $\mab>0$. 
Suppose one now introduces some uncertainty in the pre-measurement state $\psi$ by considering a coherent superposition of two basis vectors $\psi=c_1|a_1\rangle + c_2|a_2\rangle$ with $c_1\not=0\not=c_2$. Then, by choosing $c_1,c_2$ to ensure $\langle b_1|\psi\rangle=0$ one reduces the uncertainty on the measurement outcomes of a measurement in the basis $\Bcal$ in the sense that the post-measurement state can no longer be  $|b_1\rangle$. In other words, we can give up some of the precision on a measurement in $\Acal$ in order to reduce the uncertainty on a measurement in $\Bcal$.   Provided $\Acal$ and $\Bcal$ are COINC, it then follows that $\langle b_j|\psi\rangle\not=0$, for all $j\not=1$. Indeed, if for example $\langle b_2|\psi\rangle=0$ as well, then $\psi\in \PiAcal(S)\Hcal\cap \PiBcal(T)\Hcal$ with $S=\{1,2\}$ and $T=\{3,4,\dots, d\}$ so that $\nab(\psi)\leq d$, which is a contradiction.   In this sense, when the bases are COINC, the increase of information on a measurement in $\Bcal$ is constrained optimally by the loss of information on a measurement in $\Acal$. 

\subsection{Complete incompatibility: a criterion and examples}
A useful criterion for  complete incompatibility
 is:
\begin{lemma}\label{lem:minorcrit}
$\Acal$ and $\Bcal$ are COINC if and only if none of the minors of the matrix $U$  vanishes. 
\end{lemma}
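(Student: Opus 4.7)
The plan is to translate the geometric condition defining COINC into a rank condition on appropriate submatrices of $U$, and then recognize the latter as a statement about minors.

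Let $S,T\subset\llbracket 1,d\rrbracket$ and write any $\psi\in\PiBcal(T)\Hcal$ in the $\Bcal$-basis as $\psi=\sum_{j\in T}c_j|b_j\rangle$. Then $\langle a_i|\psi\rangle=\sum_{j\in T}U_{ij}c_j$, so requiring in addition $\psi\in\PiAcal(S)\Hcal$ amounts to $\langle a_i|\psi\rangle=0$ for every $i\in S^c$, i.e.,
\[
U_{S^c,T}\,c=0,\qquad U_{S^c,T}:=(U_{ij})_{i\in S^c,\,j\in T},
\]
a $(d-|S|)\times|T|$ submatrix of $U$. Hence
\[
\PiAcal(S)\Hcal\cap\PiBcal(T)\Hcal=\{0\}\ \Longleftrightarrow\ \ker U_{S^c,T}=\{0\}.
\]
Under the hypothesis $|S|+|T|\leq d$, $U_{S^c,T}$ has at least as many rows as columns, and the triviality of its kernel is equivalent to the existence of at least one nonvanishing $|T|\times|T|$ minor of $U_{S^c,T}$, which is itself a minor of $U$.

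For the ``only if'' direction I would pick an arbitrary square submatrix of $U$, with rows $R$ and columns $C$, $|R|=|C|=k$, and set $S=R^c$, $T=C$. Then $|S|+|T|=d$, COINC applies, and $U_{S^c,T}=U_{R,C}$ is a $k\times k$ matrix with trivial kernel; its determinant---an arbitrary $k\times k$ minor of $U$---is therefore nonzero. For the converse, I would observe that if every minor of $U$ is nonzero then for any $(S,T)$ with $|S|+|T|\leq d$ one may pick any $|T|$-subset of $S^c$, and the corresponding $|T|\times|T|$ minor of $U$ inside $U_{S^c,T}$ is nonzero, forcing full column rank of $U_{S^c,T}$ and hence triviality of the intersection.

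The only point requiring care is the constraint $|S|+|T|\leq d$, which is precisely what makes $U_{S^c,T}$ tall (or square) and thereby allows the equivalence between triviality of its kernel and nonvanishing of some $|T|\times|T|$ minor; dimensional reasons force the intersection to be nontrivial as soon as $|S|+|T|>d$, which is why the definition of COINC only imposes the condition in the opposite regime. No serious obstacle arises: the proof is essentially a dictionary between the intersection of the flags $\PiAcal(S)\Hcal$ and $\PiBcal(T)\Hcal$ and the column rank of the corresponding submatrix of $U$.
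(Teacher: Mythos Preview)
Your proof is correct and follows essentially the same approach as the paper: both arguments rest on the observation that $\PiAcal(S)\Hcal\cap\PiBcal(T)\Hcal=\{0\}$ is equivalent to the injectivity of the submatrix $U_{S^c,T}$, and then translate this into a condition on minors. The paper phrases both directions as contrapositives (a vanishing minor yields a nonzero $\psi$ in some intersection, and conversely a nonzero $\psi$ in an intersection forces a vanishing minor), whereas you proceed directly, but the linear-algebraic content is identical.
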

Recall that a $k$-minor of $U$ is the determinant of a $k$ by $k$ submatrix of $U$ obtained by removing $d-k$ rows and $d-k$ columns from $U$. The statement and proof are implicit in~\cite{Tao05}; we give a straightforward argument using linear algebra in Appendix~\ref{s:minorcrit}.

As an immediate application, one sees that, in dimensions $d=2$ and $d=3$, two bases $\Acal$ and $\Bcal$ are COINC  iff they are STROINC, \emph{i.e.} iff $1\leq i,j\leq d$, $\langle a_i|b_j\rangle\not=0$.  In dimension $2$ this is obvious. In dimension $3$, note that each column of $U$ is a multiple of the complex conjugate of the vector product of the two other columns. Since the components of the vector product are minors of order $2$,  their nonvanishing follows from the nonvanishing of all matrix elements of $U$. Hence the bases are COINC. 
In dimension more than three, the above is no longer true. Indeed, it is proven in~\cite{Tao05} that none of the minors of the DFT transition matrix  vanish if and only if the dimension $d$ is a prime number.  Lemma~\ref{lem:minorcrit} then implies the DFT is COINC iff $d$ is a prime number. This implies MUBs -- although maximally incompatible in the sense explained above -- are not necessarily COINC. In fact, when $d=4$, no MUBs are COINC, as easily seen from their explicit expression in Eqn.~\eqref{eq:MUB4}, and using Lemma~\ref{lem:minorcrit}. 

\subsection{Complete incompatibility, support uncertainty, and KD-nonclassicality.}
The link between complete incompatibility and support uncertainty is given by:
\begin{theorem} \label{thm:COINC_UNC}
$\Acal$ and $\Bcal$ are COINC iff $\nabmin=d+1$.
\end{theorem}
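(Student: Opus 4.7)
The plan is to prove both implications directly from the definitions, splitting the forward direction into an inequality $\nabmin\geq d+1$ and its matching upper bound $\nabmin\leq d+1$.

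\textbf{Forward direction ($\Rightarrow$):} Assume $\Acal,\Bcal$ are COINC. For the lower bound $\nabmin\geq d+1$, I will argue by contradiction: suppose there exists $\psi\neq 0$ with $n_{\Acal,\Bcal}(\psi)\leq d$. Setting $S=S_\psi$ and $T=T_\psi$, one has by construction $\psi\in \PiAcal(S)\Hcal\cap \PiBcal(T)\Hcal$ (since the coefficients of $\psi$ vanish outside $S$, resp.\ $T$), and $|S|+|T|=n_\Acal(\psi)+n_\Bcal(\psi)\leq d$. This contradicts the COINC hypothesis, so $\nabmin\geq d+1$. For the matching upper bound, I will exhibit an explicit state achieving $n_{\Acal,\Bcal}=d+1$: the basis vector $|a_1\rangle$ trivially has $n_\Acal(|a_1\rangle)=1$, and the remark in the paper that complete incompatibility implies strong incompatibility yields $\langle b_j|a_1\rangle\neq 0$ for all $j$, whence $n_\Bcal(|a_1\rangle)=d$. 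Therefore $\nabmin\leq n_{\Acal,\Bcal}(|a_1\rangle)=d+1$.

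\textbf{Reverse direction ($\Leftarrow$):} Assume $\nabmin=d+1$. I will prove the contrapositive: if $\Acal,\Bcal$ are not COINC, choose $S,T\subset\llbracket 1,d\rrbracket$ with $|S|+|T|\leq d$ such that there exists a nonzero $\psi\in \PiAcal(S)\Hcal\cap \PiBcal(T)\Hcal$. Then $S_\psi\subseteq S$ and $T_\psi\subseteq T$, so
\begin{equation*}
n_{\Acal,\Bcal}(\psi)=|S_\psi|+|T_\psi|\leq |S|+|T|\leq d<d+1,
\end{equation*}
contradicting $\nabmin=d+1$.

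Neither direction requires genuinely hard work; the argument is essentially bookkeeping of supports. The only mildly subtle point, which I would flag carefully, is that the upper bound $\nabmin\leq d+1$ in the forward direction does \emph{not} come from the COINC definition directly but relies on the chain of implications COINC $\Rightarrow$ STROINC $\Rightarrow$ all basis vectors have full support in the complementary basis. Once that is invoked, both inequalities are immediate, and the equivalence $\nabmin=d+1\Leftrightarrow\mathrm{COINC}$ follows.
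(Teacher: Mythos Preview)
Your proof is correct and follows essentially the same approach as the paper's own proof: both directions use the observation that any nonzero $\psi$ lies in $\PiAcal(S_\psi)\Hcal\cap\PiBcal(T_\psi)\Hcal$, with the reverse direction handled by contraposition and the upper bound $\nabmin\leq d+1$ witnessed by a basis vector. Your write-up is slightly more explicit than the paper's in justifying $n_\Bcal(|a_1\rangle)=d$ via the implication COINC $\Rightarrow$ STROINC, but the logic is the same.
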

\noindent\emph{Proof.} For all states $\psi$,  $\psi\in \PiAcal(S_\psi)\Hcal\cap \PiBcal(T_\psi)\Hcal\not=\{0\}$. If the bases are COINC, this implies $|S_\psi|+|T_\psi|>d$. so $\nabmin>d$. Since for the basis vectors we know $\nab(|a_i\rangle)=d+1$, we conclude $\nabmin=d+1$. 
We prove the converse by proving its contraposition.  Suppose $\Acal$ and $\Bcal$ are not COINC. Then there exist $S,T$, with $|S|+|T|\leq d$ and  $\Pi_\Acal(S)\Hcal\cap\Pi_\Bcal(T)\Hcal\not=\{0\}$. Let $0\not=\psi\in\Pi_\Acal(S)\Hcal\cap\Pi_\Bcal(T)\Hcal$. For this state $n_{\Acal}(\psi)\leq |S|, n_{\Bcal}(\psi)\leq |T|$. Hence $n_{\Acal, \Bcal}(\psi)\leq d$ and $\nabmin\leq d$. \qed\\
The theorem asserts that the uncertainty diagram of $\Acal, \Bcal$ lies above the KD-nonclassicality edge iff the bases are COINC. This is illustrated in Fig.~\ref{fig:uncdiag} where panels (a), (d) and (e) show the uncertainty diagram of bases that are not COINC as follows from our previous analysis: one does indeed observe states below the nonclassicality edge. Panels (b) and (c) on the other hand concern COINCs.

When two bases are COINC, Theorems~\ref{thm:NCbound_tris} and~\ref{thm:COINC_UNC} imply that all KD-classical states $\psi$ have minimal support uncertainty: $\nab(\psi)=\nabmin=d+1$ (Fig.~\ref{fig:uncdiag}-(b)\&(c)). Note however that it is not true that all states with minimal support uncertainty are KD-classical contrary to what happens with conjugate continuous variables $Q$ and $P$.  In other words, when the bases are COINC all states, except possibly some of those with minimal support uncertainty are KD-nonclassical.


\section{Conclusion}
It has been observed recently that KD-nonclassical states can furnish a quantum advantage. This raises the question under what conditions on the observables used to define the KD-distribution such KD-nonclassicality prevails in Hilbert space? We have established that complete incompatibility -- a notion we introduce -- implies only states with minimal support uncertainty can be KD-classical, all others being KD-nonclassical. This is therefore the optimal situation when the goal is to reduce to a minimum the presence of KD-classical states.  Our findings further imply that if two bases are mutually unbiased as well as completely incompatible, their incompatibility  mimicks most closely the incompatibility of two conjugate observables in continuous variable quantum mechanics.
When the bases are strongly but not completely incompatible, the support uncertainty of states still can serve as a witness of KD-nonclassicality. A number of open questions will be explored elsewhere~\cite{SDB21}, in particular the link between strong incompatiblity and noncommutativity.
Our findings provide an improved understanding of the general structural properties of KD-nonclassicality which we expect to be important for the conception of  experiments and protocols capable of harnessing KD-nonclassicality for quantum information tasks. 
\acknowledgments
This work was supported in part by the Agence Nationale de la Recherche under grant ANR-11-LABX-0007-01 (Labex CEMPI) and by the Nord-Pas de Calais Regional Council and the European Regional Development Fund through the Contrat de Projets \'Etat-R\'egion (CPER). The author thanks David Arvidsson-Shukur for stimulating discussions on the subject matter of this work. 

\vglue0.5cm
\appendix

\centerline{\bf APPENDICES}

\section{Proof of Theorem~\ref{thm:NCbound_tris}} \label{s:proofthm}
The proof follows from a refinement of the arguments in~\cite{ArChHa20}. We proceed by contradiction and suppose $\psi$ is KD-classical. Since the KD-distribution is insensitive to global phase rotations $|a_i\rangle \to \exp(i\phi_i)|a_i\rangle, |b_j\rangle \to \exp(i\phi_j')|b_j\rangle$, we can suppose that all $\langle a_i|\psi\rangle$ and $\langle \psi|b_j\rangle$ are nonnegative (hence real) for $1\leq i, j\leq  d$. Possibly relabeling the basis vectors, we can suppose that $\langle a_i|\psi\rangle\not=0\not=\langle b_j|\psi\rangle $ for $1\leq i\leq  n_\Acal(\psi), 1\leq j\leq n_\Bcal(\psi)$ whereas all other $\langle a_i|\psi\rangle$, $\langle b_j|\psi\rangle$ vanish.  By hypothesis, the KD-distribution of $\psi$ is real and nonnegative. Hence, for the same range of $i$ and $j$, we can  conclude $\langle a_i|b_j\rangle$ is real and nonnegative. Since, by hypothesis, $\langle a_i|b_j\rangle\not=0$, this implies that $\langle a_i|b_j\rangle>0$ for these values of $i$ and $j$. 
 
Suppose now first that $n_\Acal(\psi)=d=n_\Bcal(\psi)$. Then all matrix elements $U_{ij}$ are real and nonnegative. This is in contradiction with the fact that the columns of $U$ are orthogonal. Let us now assume that $n_\Bcal(\psi)\leq n_\Acal(\psi)<d$. 
Then, for $1\leq j< j'\leq n_\Bcal(\psi)$, we have
$$
0=\langle b_j|b_{j'}\rangle=\sum_{i=1}^{n_\Acal(\psi)} \langle b_j|a_i\rangle \langle a_i|b_{j'}\rangle + \sum_{i=n_\Acal(\psi)+1}^{d} \langle b_j|a_i\rangle \langle a_i|b_{j'}\rangle.
$$
From the above, we know that 
\begin{equation}\label{eq:inprod}
\sum_{i=1}^{n_\Acal(\psi)} \langle b_j|a_i\rangle \langle a_i|b_{j'}\rangle>0.
\end{equation}

It then follows that, for all $1\leq j< j'\leq n_\Bcal(\psi)$, one has
$$
\sum_{i=n_\Acal(\psi)+1}^{d} \langle b_j|a_i\rangle \langle a_i|b_{j'}\rangle<0.
$$
Defining, for each $1\leq j\leq n_\Bcal(\psi)$ the vector $d_j=(\langle a_{n_\Acal(\psi)+1}|b_{j}\rangle,\dots, \langle a_{d}|b_{j}\rangle)\in\C^{d-n_\Acal}$ we see from the above that $\langle d_j|d_{j'}\rangle<0$. It then follows from Lemma~\ref{lem:obtuse} below that 
$$
n_\Bcal(\psi)\leq d-n_\Acal(\psi)+1.
$$
This proves the result. 
The case where $n_\Acal(\psi)\leq n_\Bcal(\psi)<d$ is treated similarly, inverting the roles of the columns and the rows. \qed

It remains to prove the following lemma, which puts an upper bound on the number of vectors in $\C^n$ that can have an obtuse angle between them, two by two.  It is a refinement of an argument in~\cite{ArChHa20}. 
\begin{lemma}\label{lem:obtuse}
Let $n,k\in\N_*$ and $v_1, v_2, \dots, v_k\in \C^n$. Then the following holds:
if $\langle v_i|v_j\rangle< 0$ for all $1\leq i<j\leq k$, then $k\leq n+1$.
\end{lemma}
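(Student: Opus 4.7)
I would prove the lemma by induction on $n$, mimicking the standard argument for pairwise obtuse vectors in $\R^n$. The key observation that makes the real-case proof carry over to $\C^n$ is that the hypothesis $\langle v_i|v_j\rangle<0$ forces all pairwise inner products to be real, which in turn will make the orthogonal-projection coefficients in the inductive step turn out real as well.

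For the base case $n=1$, any two nonzero vectors in $\C$ are proportional. If $v_2=cv_1$, then $\langle v_1|v_2\rangle=c\|v_1\|^2<0$ forces $c$ to be a strictly negative real. For a hypothetical third vector $v_3$, one would then have $\langle v_2|v_3\rangle=\bar c\,\langle v_1|v_3\rangle=c\,\langle v_1|v_3\rangle$, which has the opposite sign to $\langle v_1|v_3\rangle$ and so cannot also be negative; hence $k\leq 2=n+1$.

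For the inductive step, assume the statement in dimension $n-1$ and let $v_1,\dots,v_k\in\C^n$ satisfy the hypothesis with $k\geq 3$ (the case $k\leq 2$ being trivial since $n\geq 1$). For $1\leq i\leq k-1$, decompose $v_i$ orthogonally with respect to $v_k$: set $a_i=\langle v_k|v_i\rangle/\|v_k\|^2$ and $w_i=v_i-a_iv_k$, so that $w_i$ lies in the $(n-1)$-dimensional complex subspace $v_k^\perp$. By hypothesis each $a_i$ is a strictly negative real, and a direct computation yields
\begin{equation*}
\langle w_i|w_j\rangle=\langle v_i|v_j\rangle-a_ia_j\|v_k\|^2<0,
\end{equation*}
the sign coming from $\langle v_i|v_j\rangle<0$ together with $a_ia_j>0$. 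If some $w_i$ vanished, then $v_i=a_iv_k$ and for any $j\notin\{i,k\}$ one would get $\langle v_j|v_i\rangle=a_i\,\langle v_j|v_k\rangle>0$, contradicting the hypothesis; thus all $w_i$ are nonzero. The family $w_1,\dots,w_{k-1}$ therefore satisfies the lemma hypothesis in dimension $n-1$, and the inductive hypothesis gives $k-1\leq n$, hence $k\leq n+1$. The only subtlety is ensuring that the orthogonal-projection identity retains strict negativity over $\C$; this works precisely because the coefficients $a_i$ turn out to be real and negative, so that no spurious phases enter the computation, and I do not anticipate any further obstacle.
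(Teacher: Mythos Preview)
Your proof is correct and follows essentially the same inductive strategy as the paper: reduce to one lower dimension by projecting the family orthogonally away from one of its members, using that the hypothesis forces the projection coefficients to be real and negative so that the strict negativity of the pairwise inner products survives. The only cosmetic differences are that the paper rotates $v_1$ onto a coordinate axis and projects onto $e_1^\perp$ whereas you project directly onto $v_k^\perp$, and that you make the nonvanishing of the projected vectors explicit where the paper leaves it implicit.
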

\begin{proof} 
The proof goes by induction. For $n=1$, one may note that one can always take $v_1>0$, by applying a common phase rotation to all $v_i\in\C$, which does not change the inner products $\overline v_i v_j$  between them. Hence $v_j<0$ for all $j\not=1$. But if $k>2$, then this contradicts the requirement that $v_2v_3>0$. So $k\leq 2$ when $n=1$. Suppose now the result holds for some $n\in\N_*$. We show it holds for $n+1$. Let $v_1, \dots, v_k\in\C^{n+1}$. As above, we can suppose $v_1=a_1e_1$, $a_1>0$. Write $v_j=a_je_1 +w_j$, with $\langle w_j, e_1\rangle=0$, for all $j=2,\dots, k$. By hypothesis, $\langle v_1|v_j\rangle<0$, so that all $a_j< 0$. As a result, for all $2\leq i<j\leq k$,
$$
0> \langle v_i|v_j\rangle=a_ia_j +\langle w_i|w_j\rangle.
$$
Hence, for all $2\leq i<j\leq k$, $\langle w_i|w_j\rangle< 0$. 
They therefore constitute $(k-1)$ nonvanishing vectors in $\C^n$, and since their mutual inner products are all negative, the induction hypothesis allows to conclude that $k-1\leq n+1$ so that $k\leq (n+1)+1$ which is the desired result. 
\end{proof}

\section{Uncertainty diagrams for the mutually unbiased bases  in dimension $4$}\label{s:MUB4}
We will, in this section, draw up the uncertainty diagrams for all mutually unbiased bases (MUBs) in dimension $d=4$. Before doing so, we remark that the definitions of strong and complete incompatibility as well as of mutual unbiasedness depend on the two bases $\Acal$ and $\Bcal$ only through the unitary transition operator between them, defined as 
$U|a_j\rangle=|b_j\rangle$, with matrix elements $U_{ij}=\langle a_i|b_j\rangle$ in the $\Acal$-basis. Indeed, if $\Acal'$ and $\Bcal'$ are two other bases, where, for some unitary operator $V$, $|a'_i\rangle=V|a_i\rangle$ and $|b'_j\rangle=V|b_j\rangle$, then $\Acal$ and $\Bcal$ are  strongly/completely incompatible or mutually unbiased if and only if 
$\Acal'$ and $\Bcal'$ are, as is readily checked; the unitary transition matrix is the same in both cases. 
One could in fact identify $\Hcal$ with $\C^d$ and systematically use the canonical basis of $\C^d$ as the $\Acal$-basis. The choice of the $\Bcal$ basis is then completely determined by the choice of a unitary transition matrix $U$. 
We will say $U$ is strongly/completely incompatible or mutually unbiased whenever the bases $\Acal$ and $\Bcal$ are. These properties are also not affected  by a renumbering of the basis vectors, nor by global phase changes of the basis vectors as is also readily checked. 

Up to permutations of rows and columns, and global phase rotations, the transition matrices of mutually unbiased bases (MUBs) in dimension $4$ are known to be all of the form~\cite{Ba19}
\begin{equation}\label{eq:MUB4}
U(s)=\frac12
\begin{pmatrix}
1&1&1&1\\
1&1&-1&-1\\
1&-1&s&-s\\
1&-1&-s&s
\end{pmatrix}, \quad |s|=1; U^\dagger(s)=U(\overline s).
\end{equation}
The case $s=i$ corresponds to the discrete Fourier transform (DFT) in dimension $4$. It is clear $U$ is not completely incompatible (COINC) since it has 
vanishing $2$-minors (Lemma~\ref{lem:minorcrit}). We now draw up the uncertainty diagram of $U(s)$, for $s\not=\pm 1$. The results are shown in Fig.~\ref{fig:uncdiag}-(a).

It  is readily checked that for all points $(\na,\nb)$  above the nonclassicality edge $\na+\nb=5$  there exists $\psi\in\Hcal$ so that $\na(\psi)=\na$ and $\nb(\psi)=\nb$.   Since Theorem~\ref{thm:NCbound_tris} applies we can conclude that all corresponding states are KD-nonclassical. 

Consider then states along the nonclassicality edge $n_\Acal(\psi)+n_\Bcal(\psi)=5$.
The basis vectors, with $n_\Acal(\psi)=1, n_\Bcal(\psi)=4$ or \emph{vice versa},  are in this case; they are KD-classical. 

Next, if $\psi$ is such that $T_\psi=\{1,2\}$, then $|S_\psi|\not=3$ as is readily checked.  The same is true when $T_\psi=\{3,4\}$. 

Consider therefore $\psi$ so that $T_\psi=\{1,3\}$. To obtain $|S_\psi|=3$, one needs $\psi=\psi_\pm$, where
\begin{eqnarray*}
\psi_+&=&\frac1{\sqrt2}(|b_1\rangle + |b_3\rangle)\\
&=&\frac1{2\sqrt2}\left(2|a_1\rangle +(1+s)|a_3\rangle +(1-s)|a_4\rangle\right)\\
\psi_-&=&\frac1{\sqrt2}(|b_1\rangle - |b_3\rangle)\\
&=&\frac1{2\sqrt2}\left(2|a_2\rangle +(1-s)|a_3\rangle +(1+s)|a_4\rangle\right).
\end{eqnarray*}
Then $S_{\psi_+}=\{1,3,4\}$ and $S_{\psi_-}=\{2,3,4\}$ since $s\not=\pm1$. The corresponding KD-distributions are 
\begin{eqnarray*}
Q_+&=&\frac14
\begin{pmatrix}
1&0&1&0\\
0&0&0&0\\
1/2(1+\overline s)&0&1/2(1+s)&0\\
1/2(1-\overline s)&0&1/2(1-s)&0
\end{pmatrix},\\
Q_-&=&\frac14
\begin{pmatrix}
0&0&0&0\\
1&0&1&0\\
1/2(1-\overline s)&0&1/2(1-s)&0\\
1/2(1+\overline s)&0&1/2(1+s)&0
\end{pmatrix}.
\end{eqnarray*}
Consequently, $\psi_\pm$ are both KD-nonclassical. The case where $T_\psi=\{1,4\}$ is similar, with $s$ replaced by $-s$. 
The cases where $T_\psi=\{2,3\}$ or $T=\{2,4\}$ and $|S_\psi|=3$ are obtained by permuting the rows and/or the columns of the KD-distributions $Q_\pm$ obtained above. One therefore sees in this example that, along the nonclassicality edge, there co-exist both KD-classical states and KD-nonclassical states. The only KD-classical states are the basis vectors. Since $U^\dagger(s)=U(\overline s)$, the same results are obtained when one switches the roles of $\Acal$ and $\Bcal$. 

As a result of Eq.~\eqref{eq:uncprincab} it only remains to investigate the states for which $n_{\Acal,\Bcal}(\psi)=4$. This can only happen when $n_\Acal(\psi)=2=n_\Bcal(\psi)$ since $U$ has no zeros. 
Straightforward calculations show such states exist and that they are all KD-classical. The latter statement also follows from~\eqref{eq:NCupperbound}, which  can be rewritten as 
\begin{equation}\label{eq:NCupperboundbis}
1\leq \KDNC(\psi)\leq \Mab\sqrt{n_\Acal(\psi)n_\Bcal(\psi)},
\end{equation}
where in the present case $\Mab=\frac12$ 
\begin{equation}\label{eq:KDNC}
\KDNC(\psi)=\sum_{i,j}|Q_{ij}|\geq 1,
\end{equation}
is the so-called nonclassicality of the state. Clearly, a state is KD-nonclassical if and only if $\KDNC(\psi)>1$.
Hence Eqn.~\eqref{eq:NCupperboundbis} implies that if $\psi$ saturates the uncertainty principle~\eqref{eq:uncprincab}, then it is KD-classical.

When $s=\pm 1$, the situation is slightly different. There are then, for example, no states with $|S_\psi|=3$ and $|T_\psi|=2$, as is readily checked. 

In Fig.~\ref{fig:uncdiag}-(b) we show the uncertainty diagram of the unitary $U(\epsilon)$ of Eqn.~\eqref{eq:Upert}, with $\epsilon=0.1$, which is a perturbation of a MUB. 
We checked numerically that all its minors are nonvanishing, thereby proving it is COINC.  When $\epsilon\not=0$, it is no longer possible to obtain simple closed analytical expressions for the various states along the nonclassicality edge. We therefore proceeded to a numerical computation of those states and we established they are KD-nonclassical, with the exception of the basis states themselves. 


\section{The discrete Fourier transform}\label{s:DFT}
Recall that the discrete Fourier transform (DFT) is 
$$
U_{ij}=\langle a_i|b_j\rangle= \frac{1}{\sqrt{d}}\exp(i\frac{2\pi}{d}ij).
$$
It is convenient to choose $0\leq i,j\leq d-1$. When $d$ is prime, it follows from Lemma~\ref{lem:minorcrit} and~\cite{Tao05} that the DFT is COINC. As a result, $n_\Acal(\psi)+n_\Bcal(\psi)\geq d+1$ for all $\psi$. This is no longer true when $d$ is not prime: in that case, there do exist states for which $n_\Acal(\psi)+n_\Bcal(\psi)< d+1$ as we will see below. This is illustrated in Fig.~\ref{fig:uncdiag}-(c)\&(d).

For all dimensions, Theorem~\ref{thm:NCbound_tris} implies that all states for which $n_\Acal(\psi)+n_\Bcal(\psi)>d+1$ are KD-nonclassical. It remains therefore to  investigate the KD-nonclassicality of the states with $n_\Acal(\psi)+n_\Bcal(\psi)\leq d+1$. Recall that the support uncertainty principle guarantees that
\begin{equation*}\label{eq:uncprincMUB}
n_\Acal(\psi)n_\Bcal(\psi)\geq d.
\end{equation*}
When $d$ is not prime, there may therefore exist states in the region of the $(n_\Acal, n_\Bcal)$-plane on or above the hyperbola $n_\Acal n_\Bcal= d$ and on or below the straight line segment $n_{\Acal}+ n_\Bcal=d+1$ that we refer to as the \emph{nonclassicality edge}. We know that all KD-classical states lie in this region. But, as we shall illustrate, there may also be some KD-nonclassical states there. We will investigate these issues now, one dimension at a time.

When $d=2$, and $U$ is the DFT, it follows from what precedes that all states except the basis states are KD-nonclassical.

When $d=3$, the DFT is equivalent, after dephasing and permuting rows and columns, to 
\begin{equation}\label{eq:DFT3}
U=\frac1{\sqrt3}
\begin{pmatrix}
1&1&1\\
1&\omega&\omega^2\\
1&\omega^2&\omega
\end{pmatrix},
\quad \textrm{where}\quad \omega=\exp(i\frac{2\pi}{3}).
\end{equation}
Using Lemma~\ref{lem:minorcrit} it is easily checked to be COINC, a fact that also follows also from the general result cited, since $3$ is prime. Straigthforward explicit computations show that, when $|S_\psi|=2=|T_\psi|$, $\psi$ is KD-nonclassical. 
As in dimension $2$, the only classical states are the basis vectors.

The case $d=4$  was treated in the previous section ($s=i$). Note that the classical states are now on the hyperbola $n_\Acal n_\Bcal=d$ which contains one point $n_\Acal(\psi)=2=n_\Bcal(\psi)$ strictly below the nonclassicality edge $n_\Acal(\psi)+n_\Bcal(\psi)=d+1=5$. So this is the lowest dimension for which the DFT admits classical states other than the basis states (Fig.~\ref{fig:uncdiag}-(a)).

When $d=5$ the DFT is COINC, since $5$ is a prime number. There are now no states below the KD-nonclassicality edge $n_\Acal+n_\Bcal=6$. It remains to investigate the KD-nonclassicality for the states on the nonclassicality edge. Apart from the basis states, which are classical, they are of three types: $n_\Acal(\psi)=2$, $n_\Bcal(\psi)=4$; $n_\Acal(\psi)=3$, $n_\Bcal(\psi)=3$; $n_\Acal(\psi)=4$, $n_\Bcal(\psi)=2$. We computed those states numerically and showed  they are nonclassical by observing numerically that $\KDNC(\psi)>1$. (See Fig.~\ref{fig:uncdiag}-(c).)

 Suppose now $d$ is not prime and  $2< d=pq$, with $1<p,q<d$; the divisors $p$ and $q$ need not be prime.   One can then show that that the DFT is not COINC by noting that there exist states $\psi$ for which $n_\Acal(\psi)+n_\Bcal(\psi)\leq d$. In fact, it is easy to construct states for which $n_\Acal(\psi)=q, n_\Bcal(\psi)=p$, so that the lower bound $n_\Acal(\psi)n_\Bcal(\psi)=d$ is reached in this case.  
For that purpose, consider, for $0\leq m<p, 0\leq s<q$, 
 $$
 |m,s\rangle =\frac1{\sqrt{q}} \sum_{k=0}^{q-1} \exp(i\frac{2\pi}{q}sk) |a_{kp+m}\rangle,
 $$
 which are readily checked to form an orthonormal basis. 
Clearly $n_\Acal(|m,s\rangle)= q$. For $j=\ell q+r$, $0\leq \ell<p$, $0\leq r<q$, one has
$$
\langle b_j| m,s\rangle=\frac1{\sqrt{q}}\exp(-i\frac{2\pi}{d}mj)\delta _{sr}.
$$
Hence $n_\Bcal(|m,s\rangle)=p$. It can be shown (see~\cite{MaOzPr04} and references therein) that the $|m,s\rangle$ are the only states with the property that $n_\Acal(\psi)n_\Bcal(\psi)=d$. %
%
As a result of~\eqref{eq:NCupperboundbis}, these states are all KD-classical.
The smallest non-prime dimension is $d=4$, which we already treated above.
When $d=6$, the previous development shows there are classical states with $n_\Acal(\psi)=2, n_\Bcal(\psi)=3$ and \emph{vice versa}. We know from Theorem~\ref{thm:NCbound_tris} that all states with $n_\Acal(\psi)+n_\Bcal(\psi)> 7$ are KD-nonclassical. However, we are no longer ensured that, given a point $(n_\Acal, n_\Bcal)$ with $n_\Acal+n_\Bcal>7$, there exists a state with $n_\Acal(\psi)=n_\Acal, n_\Bcal(\psi)=n_\Bcal$ since the DFT is not COINC when $d=6$. In addition, there may now be points $(n_\Acal, n_\Bcal)$ below the nonclassicality edge and above $n_\Acal n_\Bcal=d$ for which such a state does exist. 

We now explore these phenomena. 

We first consider all states for which $n_\Bcal(\psi)=2$. It is easy to check that, for $0\leq k_1<k_2\leq 5$, $0\leq i_1<5$ 
$$
|k_1, k_2, i_1\rangle=\frac1{\sqrt2}\left(\omega^{i_1k_2}|b_{k_1}\rangle-\omega^{i_1k_1}|b_{k_2}\rangle\right)
$$
is the only state with the property that 
$$
\langle a_{i_1}|k_1, k_2, i_1\rangle=0.
$$
Hence $n_\Acal(k_1,k_2,i_1)\leq 5$. Now note that, with $0\leq i_1<i_2\leq 5$, 
$$
\langle a_{i_2}|k_1, k_2, i_1\rangle=0
$$
as well if and only if $(k_2-k_1)(i_2-i_1)=0$ modulo $6$. Hence, if this condition is not satisfied, then $n_{\Acal, \Bcal}(k_1, k_2, i_1)=2+5=7$ and if it is,
$n_{\Acal, \Bcal}(k_1, k_2, i_1)\leq 2+4=6$.  We have computed the nonclassicality of these states numerically and have observed the states are KD-nonclassical except when $n_{\Acal, \Bcal}(k_1, k_2, i_1)=2+3=5$, as we saw above. The same observations hold true when the roles of $\Acal$ and $\Bcal$ are interchanged. 

We have further established through a numerical computation that there are no states for which $n_\Acal=3=n_\Bcal$. For that purpose, we constructed all states
of the form
$$
|k_1, k_2, k_3\rangle =d_{k_1}|b_{k_1}\rangle+d_{k_2}|b_{k_2}
\rangle+d_{k_3}|d_{k_3}\rangle,
$$ 
for which $\langle a_{i_1}|k_1,k_2,k_3\rangle=0=\langle a_{i_2}|k_1, k_2, k_3\rangle=\langle a_{i_3}|k_1, k_2, k_3\rangle$,
with $0\leq k_1<k_2<k_3\leq 5$, $0<i_1<i_2<i_3\leq 5$ and computed for each $n_\Acal$ and  $n_\Bcal$. Clearly, for all such states,
$n_\Acal\leq 3, n_\Bcal\leq 3$. As it turns out, equality is achieved for none. 

We similarly explored the pairs  $(n_\Acal, n_\Bcal)=(3,4)$ and $(4,3)$.  We found numerically all $\psi$ so that $n_\Acal(\psi)=3$, $n_\Bcal(\psi)=4$ and observed they are all KD-nonclassical. The results of these computations are summarized in Fig.~\ref{fig:uncdiag}-(d).

It is of course impractical to treat the case of general $d$ in this manner. An analytic treatment would probably involve a fair amount of number theory, such as the techniques used to prove that for prime $d$, the DFT is COINC~\cite{Tao05}. A question that comes to mind when looking at the uncertainty diagrams for the dimensions $2\leq d\leq 6$ is whether it is true that the only KD-classical states for the DFT are the ones on the hyperbola $\na \nb =d$? For prime dimensions, this would mean the only KD-classical states are the basis states. 

\section{Spin $s$}\label{s:spins}
For general integral spin $s$ we show that the $J_x$ and $J_z$ bases are not STROINC. For that purpose, we need to investigate the matrix elements
\begin{eqnarray*}
\langle J_z=m'|J_x=m\rangle&=&\langle J_z=m'|\exp(-i\beta J_y|J_z=m\rangle\\
&=&d_{m',m}^{(s)}(\beta)
\end{eqnarray*}
with $\beta=\frac{\pi}{2}$ and where the matrix $d_{m',m}^{(s)}(\beta)$ is Wigner's well-known little matrix~\cite{SA94}. It is known that
$$
d_{m,m'}^{(s)}(\beta)=(-1)^{m'-m}d_{m',m}^{(s)}(\beta).
$$
From Wigner's formula for $d_{m',m}^{(s)}(\pi/2)$ one readily concludes that, for all integer $s$, and $m\geq 0$,
$$
d_{0,m}^{(s)}(\pi/2)=s!\sum_{k=m}^s(-1)^k \frac{\sqrt{(s+m)!(s-m)!}}{(s-k)k! (s-(k-m))!(k-m)!}.
$$
Making the change of variables $k'=s+m-k$, one observes that $d_{0,m}^{(s)}(\pi/2)=0$ for all odd $m$ when $s$ is even and for all even $m$ when $s$ is odd. Indeed, there are $s-m+1$ terms in the sum above and they cancel two by two whenever this number is even, which means $s-m$ must be odd. 

When $s=2$, the explicit form of the Wigner matrix is readily computed:
$$
U=\frac12
\begin{pmatrix}
\frac12&1&\sqrt{\frac{3}{2}}&1&\frac12\\
-1&-1&0&1&1\\
\sqrt{\frac{3}{2}}&0&-1&0&\sqrt{\frac{3}{2}}\\
-1&1&0&-1&1\\
\frac12&-1&\sqrt{\frac{3}{2}}&-1&\frac12
\end{pmatrix}.
$$
We constructed the uncertainty diagram of $U$ numerically, proceeding as for the DFT in the previous section. The result is displayed in Fig.~\ref{fig:uncdiag}-(e). Note that, along the nonclassicality
edge, when $\na=2, \nb=4$ (or vice versa), there exist both KD-classical and KD-nonclassical states in this case.


\section{Proof of Lemma~\ref{lem:minorcrit}}\label{s:minorcrit}
\emph{Proof of} $\Rightarrow$. We proceed by contraposition. Suppose a $k$-minor of $U$ vanishes,  for some $1\leq k\leq d-1$. Reordering the rows and columns of $U$ (which amounts to  relabeling the basis vectors), we can assume that the matrix 
$$
\begin{pmatrix}
\langle a_1|b_1\rangle&\dots&\langle a_1|b_k\rangle\\
\vdots&\vdots&\vdots\\
\langle a_k|b_1\rangle&\dots&\langle a_k|b_k\rangle
\end{pmatrix}
$$
has vanishing determinant. Hence there exist $(\beta_1,\dots\beta_k)\not=0$ so that 
$
|\psi\rangle =\sum_{j=1}^k \beta_j |b_j\rangle
$
satisfies, for all $i\in \llbracket 1,k\rrbracket$, $\langle a_i|\psi\rangle=0$. So $\psi\in \PiAcal(S)\Hcal\cap \PiBcal(T)$ with $S=\llbracket k+1,d\rrbracket$ and $T=\llbracket 1,k\rrbracket$. Since $|S|+|T|=d$, this implies $\Acal$ and $\Bcal$ are not COINC. \\
\emph{Proof of} $\Leftarrow$. We proceed again by contraposition. Suppose $\Acal$ and $\Bcal$ are not COINC. Then there exist $S, T\subset \llbracket 1, d\rrbracket$ with $|S|+|T|\leq d$, and for which $\PiAcal(S)\Hcal\cap \PiBcal(T)\Hcal\not=\{0\}$.  Let $0\not=\psi\in \PiAcal(S)\Hcal\cap \PiBcal(T)\Hcal$. Possibly reordering the elements of $\Bcal$, we can suppose $T=\llbracket 1, k\rrbracket$ and write $\psi=\sum_{j=1}^k\beta_j|b_j\rangle$. By hypothesis $\PiAcal(S^c)|\psi\rangle=0$ so that, for all $i\in S^c$, $\langle a_i|\psi\rangle=0$. Since $|S|+|T|\leq d$, $|T|\leq |S^c|$. Reordering the basis $\Acal$, we can assume $S^c=\llbracket 1, \ell\rrbracket$, with $k\leq \ell$. It follows that 
$$
\begin{pmatrix}
\langle a_1|b_1\rangle&\dots&\langle a_1|b_k\rangle\\
\vdots&\vdots&\vdots\\
\langle a_\ell|b_1\rangle&\dots&\langle a_\ell|b_k\rangle
\end{pmatrix}
\begin{pmatrix}
\beta_1\\ \vdots \\ \beta_k
\end{pmatrix}
=0.
$$
Since $(\beta_1,\dots\beta_k)\not=0$, one can extract from the $\ell$ by $k$  matrix in the left-hand side a vanishing $k$-minor of $U$. \qed


%


\bibliographystyle{apsrev4-1}

%

\end{document}